\documentclass[letterpaper, 10 pt, conference]{ieeeconf}  

\IEEEoverridecommandlockouts                              

\usepackage{amsmath} 
\usepackage{amssymb}  
\usepackage{graphicx}
\usepackage{xcolor}
\usepackage{algorithmic}
\usepackage{algorithm}
\usepackage{accents}
\usepackage{enumerate}
\usepackage{nicefrac}
\usepackage{afterpage}
\newcommand{\ubar}[1]
{\underaccent{\bar}{#1}}
\newcommand{\R}{\mathcal{R}}
\newcommand{\B}{\mathcal{B}}
\newcommand{\nbd}[2]{\mathcal{N}^\beta_{#1}(#2)}
\title{\Large \bf
ABRA: An algorithm which cannot converge to low-quality Nash equilibria
}

\author{Vartika Singh and Philip N. Brown
\thanks{This material is based upon work supported by the Air Force Office of Scientific Research under award number FA9550-23-1-0171.}
\thanks{The authors are with the University of Colorado at Colorado Springs, CO 80918, USA.
	{\tt\small \{vsingh, pbrown2\}@uccs.edu}}%
}

\newtheorem{theorem}{Theorem}
\newtheorem{corollary}{Corollary}
\newtheorem{lemma}{Lemma}
\newtheorem{conjecture}{Conjecture}
\newtheorem{proposition}{Proposition}
\newcommand{\A}{\mathcal{A}
}
\newcommand{\hide}[1]{}
\renewcommand{\a}{{\bf a}}
 \newcommand{\eop}{{\hfill $\blacksquare$}}

\renewcommand{\P}[2]{\mathcal{P}_{#2}(#1)}
\begin{document}

\maketitle
\thispagestyle{empty}
\pagestyle{empty}

\begin{abstract}

We consider a game theoretic approach to solve multi-agent coordination problems with submodular objectives. It is known for such problems that the Nash equilibria for the corresponding game are always within 50\% of the optimal. A recent work further shows that the equilibria which achieve this worst-case bound are not stable. Leveraging this, we design an Approximate Best Response Algorithm (ABRA) governed by a noise parameter and a rationality parameter. The noise allows ABRA to escape the bad equilibria and the rationality parameter balances any degradation in the objective function caused by the noise. 
We show for any two-player game that if ABRA converges to a Nash equilibrium, its system objective value is strictly more than 50\% of optimal plus a term controlled by the noise parameter.
Otherwise, ABRA converges to some recurrent class: if a recurrent class contains any action profile yielding system objective less than 50\% of the optimal, the class must also contain  either the optimal action profile or an action profile yielding system objective strictly more than 50\% of the optimal by the same amount in addition to a factor controlled by noise parameter. The time that ABRA spends in such action profiles can be controlled using the rationality parameter.
Using numerical simulations, we show that 
the minimum expected objective function is typically well above half of the optimal.

\end{abstract}

\section{INTRODUCTION}

Multi-agent coordination problems have a wide range of applications such as task assignment, resource sharing e.g., channel access control in wireless networks, coverage optimization, network routing etc. (see \cite{Qu}-\cite{Kordonis}). The goal is to maximize a system objective controlled by the actions of all the agents. A natural approach to solve such problems uses game theory (see \cite{marden}-\cite{martin}) -- a system planner endows the agents with individual utility functions and decision rules that depend only on their local information. The system planner can design the local utility functions in order to drive the equilibria of the game to a desired system optimal (e.g., \cite{marden_util_des}). 

One of the utility designs in this context is the marginal contribution, where the utility of any agent equals the gain in the system objective when the agent decides to participate. This choice of utility design leads to a potential game with the system objective as the potential function (see e.g. \cite{marden_potential}). It is well known that the potential function and thus the system objective is maximized at one of the pure Nash equilibria (NE) of the corresponding potential game. However, in case of multiple NE, the agents may arrive to an undesired equilibrium with inferior system objective value. Thus, one requires a metric to measure the performance of any game, for example, \textit{price of anarchy} (see \cite{PoA}) which is defined to be the ratio of system objective at the system optimal to that at the worst NE.

{When the system objective is sub-modular and the marginal contribution utility design is considered,  the resulting game becomes a \textit{valid utility game} (see \cite{vetta})}. For this special class of games, \cite{vetta} proves that the price of anarchy (PoA) is at least $\frac{1}{2}$. \hide{Authors in \cite{marden_util_des} develop a framework that characterises the PoA for any choice of utility design, and focus on optimizing the PoA by choosing appropriate utility functions. In contrast to \cite{marden_util_des}, we fix the utility design to the marginal contribution and focus on designing an algorithm such that system performance is improved.} Further, \cite{seaton} show that, a) any bad NE  with PoA close to $\frac{1}{2}$ is not stable and, b) any stable NE has a PoA significantly more than $\frac{1}{2}$ for the valid utility games. In other words, a stable NE cannot be very bad, and a bad NE cannot be stable. 
Thus  it may be possible to avoid an undesirable NE  by adding some noise in the system.

In this work, we  propose  Approximate Best Response Algorithm (ABRA) which allows the agents to choose actions which are approximately best responses, controlled by a noise parameter $\beta$ and rationality parameter $p$.   At any iterate,  the updating agent either chooses among the best response actions with probability governed by rationality parameter $p$ or chooses an action uniformly from the $\beta$-neighbourhood of the best responses, excluding the best response set.%
%

For all two-player valid utility games with marginal contribution utilities, we prove that if the sequence of action profiles generated using ABRA algorithm converges to a single action profile, then it must be an NE with the system objective value within $\frac{1}{2}$ of the optimal  plus a term increasing in $\beta$. This implies that if ABRA converges to an NE, that NE is guaranteed to exceed the worst-case PoA bound at least by an amount that can be controlled with the noise parameter $\beta$. Further, if the sequence converges to a recurrent class of action profiles that does not contain the optimal action profile (which we call a \textit{sub-optimal class}), we prove that {the sub-optimal class always contains an NE with system objective value strictly more than $\frac{1}{2}$ of the optimal plus a term increasing with noise parameter $\beta$. 
Otherwise, ABRA converges to a recurrent class that contains the optimal action, and is  an \textit{optimal recurrent class}. Interestingly, the lower bound on system objective for both the optimal recurrent class  and sub-optimal class degrades with increasing noise.  Thus, there is trade-off between improving the quality  of reachable NE, and degrading the lower bound on optimal or sub-optimal class by choosing a noise parameter $\beta$.} 

%
%
%
Using numerical simulations, we depict that for any noise parameter  $\beta$, an appropriate choice  of rationality parameter $p$ ensures that the expected value of the system objective  is well within $\frac{1}{2}$ of the optimal plus a term increasing with noise parameter $\beta$.  Interestingly, when the worst case lower bound in the optimal class is achievable, {the lower bound on system objective  of the NE in any sub-optimal class  is guaranteed to be significantly high} (see Figure \ref{fig:tradeoff}). In all, one can control the quality of optimal and sub-optimal classes using the noise parameter $\beta$ and rationality parameter~$p$.

\begin{figure}
    \centering
    \includegraphics[trim ={3cm 8cm 3cm 9cm}, clip, scale=0.5]{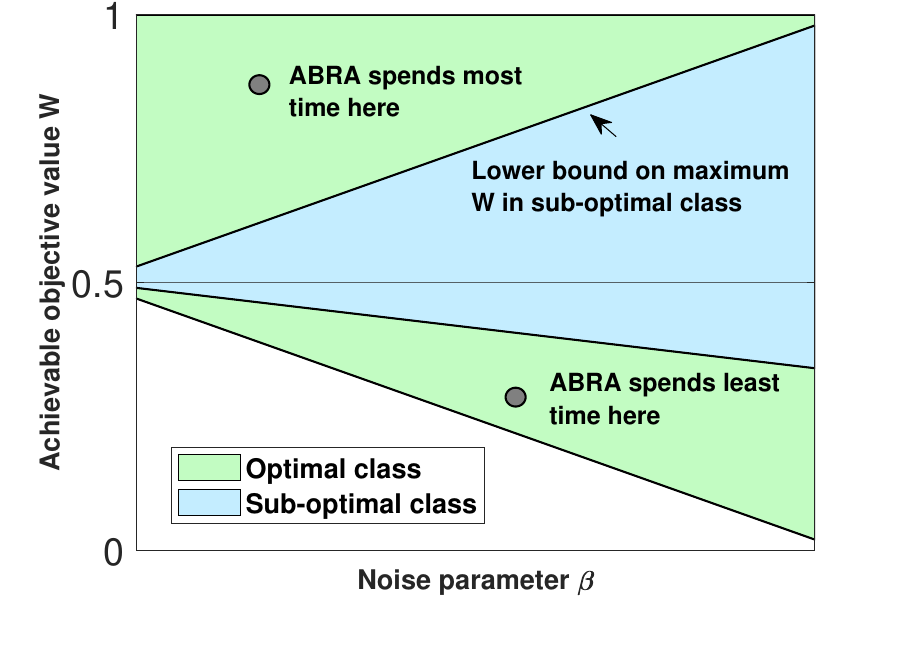}
    \caption{The values that the system objective function can take once the Approximate Best Response Algorithm (ABRA) reaches a recurrent class: (i) The green shaded region corresponds to achievable values of system objective in the worst case scenario if ABRA converges to a recurrent class that contains an optimal action; although the achievable action profiles may contain values less than $\frac{1}{2}$ of the optimal, an appropriate choice of rationality parameter $p$ can ensure that ABRA spends limited time in such bad states (see numerical experiments in Section~\ref{sec: numerical}). (ii) The blue shaded region corresponds to achievable values of system objective when ABRA converges to a sub-optimal recurrent class in the worst case scenario. There is a trade-off between improving the quality of sub-optimal classes and degrading the quality of optimal classes based on noise parameter~$\beta$.}
    \label{fig:tradeoff}
\end{figure}

\section{Model}

We consider a game theoretic approach for sub-modular maximization problems. A system designer provides two agents with local utility functions and decision rules that depend only on the information available to the agents. The action set for any agent $i$ is given by  $ \A_i:=\{a_i^1,\dots,a_i^{n_i}\}$ where $n_i<\infty$, and the system objective is specified by a function $W:\A_1\times\A_2 \to [0,1]$. We assume the function $W$  to be sub-modular, non-decreasing and normalised; that is, for any sets $a, a'$ and $b$ such that $a \subset a'$, we have (with $(a,b)$ representing $a \cup b$),
\begin{eqnarray}\label{eqn_sub_mod}
    W(a,b) - W(a) &\ge &  W(a', b) - W(a'), \nonumber\\
    W(a')\ge W(a) &\mbox{and}& W(\varnothing)=0.
\end{eqnarray}  

Any agent $i$ can observe the action of the other agent but is unaware of their action set $\A_{-i}$, thus cannot directly compute the optimal action profile.  The system designer aims to design the utility functions and decision rules such that the agent behaviour is driven towards a desirable outcome.  In this paper, we consider the local utility functions of the agents to be modelled as  the marginal contribution (see \cite{marden_potential}):
\begin{eqnarray}\label{eqn_util}
    U_i(a_i,a_{-i}) := W(a_i,a_{-i}) - W(a_{-i}),
\end{eqnarray}where $a_i$ represents the action of agent $i$ and $a_{-i}$ represents the  action  of  the other agent. It is clear from \eqref{eqn_util} that the local utility of any agent equals the gain in the system utility when that agent chooses to participate. One may anticipate that when agents act as per their self-interest, and try to improve their local utilities, they may also improve the system objective. 

 The marginal contribution utility structure \eqref{eqn_util} induces a potential game among the agents with the system objective function $W$ as the potential function (see \cite{marden_potential}). It is well known that one of the pure Nash equilibrium (NE) of a potential game is the maximizer of the potential function, and thus the maximizer of $W$. However, the game may have multiple NE, and the NE other than the optimal can be inferior and degrade the system objective value.
 It is shown in \cite{vetta} that any NE is guaranteed to be within $\frac{1}{2}$ of the optimal for such games. Further, \cite{seaton} shows that the `bad' NE (the ones with inferior system objective) are not stable -- at an unstable NE, none of the agents lose much by switching to the optimal action, i.e., the optimal action is an approximate best response for any agent. Inspired by this, we design an algorithm that leverages the fragility of the `bad' NE in order to escape them by adding a noise parameter $\beta$. We describe the algorithm in the next section.

\section{Approximate Best Response Algorithm}\label{sec_algo}

From \cite{seaton}, since the `bad' NE are unstable, one may be able to   escape them by adding some noise while considering the best responses of the agents. We add a noise parameter $\beta \in [0,0.5)$ and allow the agents to choose the approximate best response from the $\beta$-neighbourhood of their best response\footnote{A noise parameter $\beta\ge 0.5$ does not add any value to the system and makes all  NE recurrent.}.  Basically, the agent chooses its action (in a probabilistic manner) from the set of actions which provide utility within $\beta$  of the best possible utility for the given action of other agent. 
A high noise parameter $\beta$ may degrade the performance of the system by allowing inferior actions in the $\beta$-neighbourhood, while a low noise parameter may not be sufficient to escape the bad NE.  Our approach allows us to explicitly characterize this trade-off. 


We first define a few notations. Let $\B_i(a_{-i})$ represent the best response set of agent $i$ when action of other agent is $a_{-i}$;  i.e., $\B_i(a_{-i}):= \arg\max\{U_i(a_i,a_{-i}): a_i \in \A_i\}$. Further, define $\nbd{i}{a_{-i}}$ to be the set of actions in the  $\beta$-neighbourhood of the best response  set for agent $i$ \textit{excluding} the best response actions (with $a^* \in \B_i(a_{-i})$):
\begin{equation*}
  \nbd{i}{a_{-i}} := \{a_i: |U_i(a^*, a_{-i})-U(a_i,a_{-i})| \le \beta\} \backslash \B_i(a_{-i}).
\end{equation*}%
Lastly, let $\a(t):=(a_1(t),a_{2}(t))$ denote the joint action profile of the agents at time $t$.

Both the agents start at some initial joint action profile $\a(0) = (a_1(0),a_2(0))$. At any $t \ge 1$, an agent (say $i$) is selected uniformly from the set of agents, who  observes the joint action profile $\a(t-1)$, and updates its own action to $a_i(t)$ as follows,
\begin{eqnarray*}
    a_i(t) =\left\{ \begin{array}{ll}
       a \in \B_i(a_{-i}(t-1))  &  \mbox{ with probability } p,\\
         a \in \nbd{i}{a_{-i}(t-1)}  &  \mbox{ with probability } 1-p.\\ 
    \end{array}\right.
\end{eqnarray*}In the above, $a$ is chosen from $\B_i(a_{-i}(t-1))$, or $\nbd{i}{a_{-i}(t-1)}$ in a uniform manner\footnote{Note that the usual best response dynamics algorithm is a special case of ABRA when $\beta=0$.}. The rationality parameter $p$ satisfies $p<1$, so that the actions from $ \nbd{i}{a_{-i}(t-1)} $ are chosen with positive probability. As time progresses, the sequence of joint action profiles $\{\a(t)\}_t$ evolves as a Markov chain (see e.g., \cite{Levin,Norris}) governed by the noise parameter $\beta$ and rationality parameter $p$.  This Markov chain need not be irreducible, and may have multiple recurrent classes further depending upon the structure of $W$.   We refer to any recurrent class that contains an optimizer of $W$ as an \textit{optimal class} (denoted by $\mathcal{R}_\beta^*$) and  any recurrent class that does not contain the optimizer as a \textit{sub-optimal class} (denoted by $\mathcal{R}_\beta$). The Markov chain may converge to an absorbing state, a sub-optimal class or an optimal class depending upon initial distribution. We now provide the qualitative analysis of the system objective over these classes generated under ABRA with noise parameter $\beta$ and rationality parameter $p$.

\section{Main results}\label{sec:main_results}
Without loss of generality, assume that $(a^1_1,a_2^1)$ is the optimal profile with $W(a^1_1,a_2^1)=1$ and define
\begin{equation}\label{eqn_x_bar}
   \ubar{x}:= \min\{ W(a_1^j,a_2^1),W(a^1_1,a_2^k): j \in \mathcal{A}_1, k\in \mathcal{A}_2 \}, 
\end{equation}
to be the minimum value of system objective function that can be achieved when any agent unilaterally deviates from the optimal.
\textit{For ease of explanation, say $\ubar{x}$ is achieved by  unilateral deviation by agent 1, i.e., $\ubar{x}=W(a_1^k,a_2^1)$ for some $a_1^k \in \mathcal{A}_1$}. The analysis follows in the exact similar manner if $\ubar{x}$ was achieved by the deviation of agent 2. 

\subsection{Absorbing states}
If the sequence of  action profiles $\{\a(t)\}$ generated using ABRA converges to an absorbing state, then we have the following result  with proof in Appendix:

\begin{theorem}\label{thm_NE}
     If $(a_1^*,a_2^*)$ is an absorbing state for Markov chain $\{(a_1(t),a_2(t))\}_t$ generated using ABRA with noise parameter $\beta$, then, i) $(a_1^*,a_2^*)$ is an NE, and  ii) the system objective $W(a_1^*,a_2^*) > \frac{1}{2}+\beta$. 
\end{theorem}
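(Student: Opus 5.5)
The plan is to extract from the absorbing property a strict $\beta$-gap in each agent's utility, and then run the smoothness argument of \cite{vetta} with that gap carried along; the gap should improve the factor $\frac12$ to $\frac12+\beta$. Write $o=(o_1,o_2)=(a_1^1,a_2^1)$ for the optimal profile, so $W(o)=1$.

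\textbf{Step 1 (part i and the $\beta$-gap).} $(a_1^*,a_2^*)$ is absorbing, so whichever agent $i$ is selected must keep $a_i^*$ with probability one. Since $p>0$, every element of $\B_i(a_{-i}^*)$ is chosen with positive probability, so $\B_i(a_{-i}^*)=\{a_i^*\}$. In particular $a_i^*$ is a best response and $(a_1^*,a_2^*)$ is an NE. Since $p<1$, the neighbourhood branch is taken with positive probability, so $\nbd{i}{a_{-i}^*}=\varnothing$.

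I will read the case of an empty neighbourhood as ``the agent stays at its best response''; this is the one interpretive point to state explicitly. Under this reading, every $a_i\neq a_i^*$ satisfies
\[ U_i(a_i,a_{-i}^*) < U_i(a_i^*,a_{-i}^*)-\beta. \qquad (\ast) \]

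\textbf{Step 2 (case split on whether $o_i=a_i^*$).}
\begin{itemize}
\item If $o=(a_1^*,a_2^*)$, then $W=1>\frac12+\beta$ because $\beta<\frac12$.
\item Suppose exactly one coordinate agrees, say $o_1=a_1^*$ and $o_2\neq a_2^*$. Applying $(\ast)$ for agent 2 and cancelling $W(a_1^*)$ gives $1=W(a_1^*,o_2)<W(a^*)-\beta\le 1-\beta$. This is a contradiction, so the case cannot occur. The symmetric case is identical.
\end{itemize}
This case analysis is the step I expect to need the most care. A naive one-sided Vetta bound in these cases only yields $\frac12+\frac{\beta}{2}$, so the contradiction argument is what rescues the full $\beta$.

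\textbf{Step 3 (main case $o_i\neq a_i^*$ for both $i$).} Treat profiles as sets. By monotonicity and submodularity,
\[ 1=W(o)\le W(o\cup a^*)\le W(a^*)+\sum_{i}\big[W(o_i\cup a^*)-W(a^*)\big]. \]
Next, by submodularity applied with $a_{-i}^*\subset a^*$, each bracket is at most $W(o_i,a_{-i}^*)-W(a_{-i}^*)=U_i(o_i,a_{-i}^*)$. By $(\ast)$ this is $<U_i(a^*)-\beta$. Finally, submodularity from $\varnothing$ gives
\[ U_1(a^*)+U_2(a^*)=2W(a^*)-W(a_1^*)-W(a_2^*)\le W(a^*). \]
Combining these yields $1<2W(a^*)-2\beta$, i.e.\ $W(a_1^*,a_2^*)>\frac12+\beta$, which is part ii.
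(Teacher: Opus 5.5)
Your proof is correct, but it takes a different route from the paper's. The paper uses only one agent's gap. It first derives \eqref{eqn_low_min_entry}: by Lemma~\ref{lem_min_entry} applied at the optimum, together with the convention that $\ubar{x}$ is attained by a deviation of agent~1, every profile in the row of $a_1^1$ satisfies $W(a_1^1,a_2^j)\ge\max\{\ubar{x},1-\ubar{x}\}\ge\frac12$. A single use of agent~1's $\beta$-gap then gives $W(a_1^*,a_2^*)-\beta>W(a_1^1,a_2^*)\ge\frac12$.

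You instead run Vetta's smoothness chain around $a^*$ and spend each agent's gap once. That is why you need both $\beta$'s to reach $1<2W(a^*)-2\beta$. It is also why you must first exclude profiles in which a coordinate of $a^*$ coincides with the optimum.

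What each route buys:
\begin{itemize}
\item The paper's argument is shorter and evaluates $W$ only on profiles, single actions and $\varnothing$. It also produces \eqref{eqn_low_min_entry}, which it reuses in the proofs of Theorem~\ref{thm_abs_state} and Lemma~\ref{thm_rec_state}.
\item Yours never touches $\ubar{x}$ or the WLOG on which agent attains it, and it does not rely on the two-player row structure. The price is that you evaluate $W$ on non-profile unions such as $o\cup a^*$; the model's submodularity assumption for arbitrary sets does permit this.
\end{itemize}

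Two details favour your version. First, the paper's one-line inequality $W(a_1^*,a_2^*)-\beta>W(a_1^1,a_2^*)$ tacitly requires $a_1^*\neq a_1^1$, and your Step~2 is exactly what covers the omitted case. Second, you state explicitly the conventions $p>0$ and the empty-neighbourhood rule, which sit behind the paper's ``trivially an NE''.
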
 
\vspace{2mm}

The above result implies that the sequence of action profiles $\{(a_1(t),a_2(t))\}_t$ generated using ABRA converges only to the NE that have system objective value within $\frac{1}{2}+\beta$ of the optimal, recall the optimal value is 1. This provides a significant improvement over the known bound \cite{vetta} of any NE being within $\frac{1}{2}$ of the optimal. Also, note that the above result is true for any choice of rationality parameter $p<1$. The above result also implies that any NE with system objective value as $\frac{1}{2}$ of the optimal is never achieved even when noise parameter $\beta=0$. This reaffirms the findings of \cite{seaton} regarding the fragility of such bad NEs.

\subsection{Sub-optimal classes}
Now, we consider the sub-optimal classes -- recall that any sub-optimal class is a recurrent class of the action profiles generated using ABRA with noise and rationality parameter $(\beta,p)$, that does not contain the optimal action profile. We have the following result with proof in Appendix.

{\begin{theorem}\label{thm_abs_state}
If $\mathcal{R}_\beta$ is a sub-optimal class under ABRA with noise parameter $\beta$, then
\begin{eqnarray}\label{eqn_bdd_min_rec}
    \min_{(a_1,a_2) \in \R_\beta} W(a) &>&\frac{1}{2}-\frac{\beta}{2}.
\end{eqnarray}Further, if for some $0 \le \delta< \frac{\beta}{2}$, $\min_{(a_1,a_2) \in \R_\beta} W(a) = \frac{1}{2}-\delta$, then,
  \begin{eqnarray}\label{eqn_new_max_two_p}
      \max_{(a_1,a_2)  \in \R_\beta} W(a) &>& \frac{1}{2}+\delta + \beta.
  \end{eqnarray}
\end{theorem}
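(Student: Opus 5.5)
The plan is to use two ingredients: submodularity, which controls unilateral deviations toward the optimum, and the closure of a recurrent class, which says every transition of positive probability from a state of $\R_\beta$ stays in $\R_\beta$. Write $(a_1^1,a_2^1)$ for the optimum with value $1$.

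\textbf{Key inequality.} For any profile $(a_1,a_2)$, submodularity and monotonicity give
\begin{equation*}
U_1(a_1^1,a_2)+U_2(a_1,a_2^1) \ge W(a_1^1,a_2^1)-W(a_1,a_2) \ge 1-W(a_1,a_2).
\end{equation*}
This is the smoothness bound behind the $\frac{1}{2}$ result of \cite{vetta}. It follows by writing
\begin{equation*}
W(a_1^1,a_2^1)\le W(a_1^1,a_1,a_2^1,a_2)=W(a_1,a_2)+\big[W(a_1^1,a_1,a_2)-W(a_1,a_2)\big]+\big[W(a_1^1,a_1,a_2^1,a_2)-W(a_1^1,a_1,a_2)\big]
\end{equation*}
and bounding each bracket by the corresponding marginal utility.

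\textbf{First claim (\ref{eqn_bdd_min_rec}).} Let $a=(a_1,a_2)\in\R_\beta$ attain the minimum $m$ of $W$ over $\R_\beta$. Since $\R_\beta$ is closed and does not contain the optimum, the optimal action $a_i^1$ cannot lie in $\B_i(a_{-i})\cup\nbd{i}{a_{-i}}$ whenever switching to it would eventually lead to $(a_1^1,a_2^1)$. More directly, $a_i^1$ must lie outside the reachable set, so $U_i(a_i^1,a_{-i})<\max U_i(\cdot,a_{-i})-\beta$. The best response $b_i$ from $a$ keeps us in $\R_\beta$, and its utility is bounded by $U_i(b_i,a_{-i})\le W(b_i,a_{-i})-W(a_{-i})$.

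I would then compare the two states $(b_1,a_2)$ and $(a_1,b_2)$, both in $\R_\beta$, together with $a$ itself. Combining the key inequality with the strict gap $\beta$ gives
\begin{equation*}
1-m < \sum_i \big(U_i(b_i,a_{-i})-\beta\big).
\end{equation*}
Here each $U_i(b_i,a_{-i})$ should be bounded using the fact that from $(b_i,a_{-i})$ the chain cannot reach the optimum either. Iterating from those states as well, I expect to get $U_i(b_i,a_{-i})\le W(b_i,a_{-i})-W(a_{-i})\le 1-m$ minus something, together with $W(a_{-i})\ge$ the sum of the utilities when $a_{-i}$ is a best response. The target is $2m>1-\beta$, i.e.\ $m>\frac{1}{2}-\frac{\beta}{2}$. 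The likely route is to exploit that in $\R_\beta$, from a minimizing state one can move to a state in which one player is at best response. At such a state $W\ge U_1+U_2$, since $W(a_1,a_2)\ge W(a_1)+U_2\ge U_1+U_2$ by submodularity. The bound then follows once the exclusion of the optimum is charged $\beta$ on each player's side, with a factor $\frac{1}{2}$ from averaging.

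\textbf{Second claim (\ref{eqn_new_max_two_p}).} Suppose $m=\frac{1}{2}-\delta$ with $\delta<\frac{\beta}{2}$. The same inequality chain then forces some intermediate state, namely a best-response move from the minimizer, to have $W>\frac{1}{2}+\delta+\beta$. The argument is by contradiction: if every state in $\R_\beta$ had $W\le\frac{1}{2}+\delta+\beta$, then the sum of best-response utilities from the minimizer is at most $(\frac{1}{2}+\delta+\beta)-m$ per player. Plugging this into the key inequality together with the $\beta$-exclusion gap contradicts $1-m=\frac{1}{2}+\delta$. I would use the normalization via $\ubar{x}$ and the WLOG that it is attained by agent 1 to decide which player's deviation to use.

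\textbf{Main obstacle.} The hard step is the first claim. The issue is that exclusion from the class only says the optimum is not reachable along some path, not merely that it is not a one-step $\beta$-response. So I would first prove the one-step version for $a_i^1$ from states of $\R_\beta$ (immediate, since $(a_1^1,a_{-i})$ would be in $\R_\beta$ and then the other player's best or approximate response could reach the optimum). Only then would I assemble the inequalities.
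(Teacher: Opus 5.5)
Your one-step exclusion ($W(a_1^1,a_2)<\max_a W(a,a_2)-\beta$ and $W(a_1,a_2^1)<\max_b W(a_1,b)-\beta$ at every state of $\R_\beta$) and your key inequality are both correct. As you note, the exclusion is immediate, so that is not where the difficulty lies. The genuine gap is that nothing in the proposal bounds the best-response values $W(b_1,a_2)$, $W(a_1,b_2)$ from above in terms of $m=\min_{\R_\beta}W$. Without such a bound, $1-m<\sum_i\big(U_i(b_i,a_{-i})-\beta\big)$ cannot be closed.

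The bounds you expect go the wrong way. Monotonicity gives $W(a_{-i})\le W(a_1,a_2)=m$, so neither $W(b_i,a_{-i})-W(a_{-i})\le 1-m$ nor, in your second claim, ``at most $(\frac{1}{2}+\delta+\beta)-m$ per player'' follows. The missing fact is that every state of a recurrent class is \emph{entered} by some player's update. Hence at the minimizer at least one player's action is a $\beta$-approximate best response, e.g.\ $\max_b W(a_1,b)\le m+\beta$. Your averaging heuristic needs this for both players. If it held for both you would indeed get $1-m<\big(W(a_1,a_2)-W(a_2)\big)+\big(W(a_1,a_2)-W(a_1)\big)\le m$, but only one side is guaranteed.

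The paper handles this one-sided situation explicitly, in two cases.

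If player 1 is the one at a $\beta$-approximate best response, then $\frac{1}{2}\le W(a_1^1,a_2)<\max_a W(a,a_2)-\beta\le m$. This uses $W(a_1^1,b)\ge\max\{\ubar{x},1-\ubar{x}\}\ge\frac{1}{2}$ for all $b$, from Lemma~\ref{lem_min_entry} plus the WLOG on $\ubar{x}$.

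Otherwise it is player 2. Exclusion then gives $W(a_1,a_2^1)<m$, and Lemma~\ref{lem_min_entry} gives $W(a_1^1,b)>1-m$ for every $b$. A second state is now needed, and finding it uses irreducibility of $\R_\beta$, not merely the closure that your forward-from-the-minimizer reasoning relies on. Some $(a_1,\tilde b)\in\R_\beta$ must have $a_1$ as a $\beta$-approximate best response to $\tilde b$. Otherwise the profiles of $\R_\beta$ with first coordinate $a_1$ could be entered only from each other, which is incompatible with $\R_\beta$ being closed and irreducible. At that state, $1-m<W(a_1^1,\tilde b)<W(a_1,\tilde b)\le\max_b W(a_1,b)\le m+\beta$, which is exactly $m>\frac{1}{2}-\frac{\beta}{2}$.

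The second claim then needs no contradiction argument. In the same case $W(a_1^1,a_2)>1-m$, and exclusion for player 1 gives $\max_a W(a,a_2)>1-m+\beta=\frac{1}{2}+\delta+\beta$, with the maximizing profile lying in $\R_\beta$.
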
}
\vspace{2mm}

From the above theorem, it is guaranteed that {the  sub-optimal class always contains a Nash equilibrium that yields system objective value  more than $\frac{1}{2}$ of the optimal  by a factor $\beta$. Moreover, when the sub-optimal class contains an action profile with system objective less than $\frac{1}{2}$ of the optimal, then a) the quality of the Nash equilibrium is further improved by the same amount, and b) the system objective of any action profile in  sub-optimal class can not be worse than $\frac{1}{2}$ of the optimal by a factor more than $\frac{\beta}{2}$.  Thus, when $\{\a(t)\}_t$ converges to any sub-optimal class, the expected system objective can be  strictly more than $\frac{1}{2}$ of the optimal by an appropriate choice of rationality and noise parameter.}

\subsection{Optimal classes}
{We now consider any optimal class $\mathcal{R}^*_\beta$ -- the recurrent class of action profiles generated using ABRA that contains the optimal action profile $(a^1_1,a^1_2)$. We derive a lower bound on the system objective value over the optimal class; this bound is not arbitrarily bad. We have the following result with proof in the Appendix.}

\begin{lemma}\label{thm_rec_state}
If $\mathcal{R}_\beta^*$ is an optimal class of action profiles under ABRA with noise parameter $\beta$, then the system objective function satisfies the following:
\begin{equation}
    W(a_1,a_2) \ge \max\{\max\{1-\ubar{x}, \ubar{x}\}-2\beta, \ubar{x} - \beta\}
\end{equation} for all $(a_1,a_2) \in \mathcal{R}^*_\beta$, with $\ubar{x}$ as in \eqref{eqn_x_bar}.  
\end{lemma}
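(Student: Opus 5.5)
The plan is to use the structure of a recurrent class that contains the optimum. Since $\mathcal{R}^*_\beta$ is a closed communicating class containing $(a_1^1,a_2^1)$, every $(a_1,a_2)\in\mathcal{R}^*_\beta$ is reachable from $(a_1^1,a_2^1)$ by a finite sequence of ABRA transitions. So it suffices to lower bound $W$ along any such path. The basic observation is that, by \eqref{eqn_util}, $W(a_i,a_{-i}) = U_i(a_i,a_{-i}) + W(a_{-i})$, with the second term independent of $a_i$. Hence whenever agent $i$ updates against $a_{-i}$ to some action in $\B_i(a_{-i})\cup\nbd{i}{a_{-i}}$, the new profile satisfies
\begin{equation*}
W(a_i',a_{-i}) \ge \max_{b\in\A_i} W(b,a_{-i}) - \beta.
\end{equation*}
In particular, the new value is at least $W(a_i^1,a_{-i})-\beta$ and at least $W(a_i,a_{-i})-\beta$ for the pre-update action $a_i$.

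Next I will record two one-shot facts from \eqref{eqn_sub_mod} and \eqref{eqn_x_bar}.
(a) By the definition of $\ubar{x}$, we have $W(a_1^1,b)\ge \ubar{x}$ and $W(b,a_2^1)\ge\ubar{x}$ for every $b$.
(b) Submodularity with $\varnothing\subset a_2^1$ gives $W(a_1^1)\ge W(a_1^1,a_2^1)-W(a_2^1) = 1-W(a_2^1)$. Monotonicity gives $W(a_2^1)\le W(a_1^k,a_2^1)=\ubar{x}$, using the paper's convention that $\ubar{x}$ is attained by a deviation of agent 1. Hence $W(a_1^1,b)\ge W(a_1^1)\ge 1-\ubar{x}$ for all $b$.

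For the $\ubar{x}-\beta$ bound, I take any state in $\mathcal{R}^*_\beta$ other than the optimum and look at the last transition that changed it. If agent 1 moved last against $a_2$, then $W\ge W(a_1^1,a_2)-\beta\ge \ubar{x}-\beta$ by (a). If agent 2 moved last against $a_1$, then $W\ge W(a_1,a_2^1)-\beta\ge\ubar{x}-\beta$, again by (a). The optimum itself has $W=1$. Together with (b), the same argument shows that any state whose last update was by agent 1 has $W\ge 1-\ubar{x}-\beta$.

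The main obstacle is the bound $1-\ubar{x}-2\beta$ for states whose last update was by agent 2, since (b) only controls profiles with agent 1 at $a_1^1$. Here I will go back along the path to the most recent time agent 1 changed its action to the current $a_1$, or to time $0$ if agent 1 never moved. That profile $(a_1,\tilde a_2)$ has $W\ge 1-\ubar{x}-\beta$, either by the previous paragraph or because it is the optimum. Since then, only agent 2 has moved, always against the same $a_1$. Each such move lands at value at least $\max_b W(a_1,b)-\beta\ge W(a_1,\tilde a_2)-\beta\ge 1-\ubar{x}-2\beta$. Crucially, this does not accumulate over repeated agent-2 moves, because the reference maximum $\max_b W(a_1,b)$ is fixed while $a_1$ is fixed. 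Moves by agent 1 that leave its action unchanged are handled the same way. Combining the cases gives $W\ge \max\{1-\ubar{x}-2\beta,\ \ubar{x}-\beta\}$, and $\ubar{x}-2\beta\le\ubar{x}-\beta$, which is exactly the stated bound $\max\{\max\{1-\ubar{x},\ubar{x}\}-2\beta,\ \ubar{x}-\beta\}$.
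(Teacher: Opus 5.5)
Your proof is correct and follows essentially the same route as the paper. The paper also uses $W(a_1^1,b)\ge\max\{\ubar{x},1-\ubar{x}\}$ for all $b$ (its \eqref{eqn_low_min_entry}, obtained from Lemma~\ref{lem_min_entry} exactly as in your fact (b)); it argues that a profile in $\mathcal{R}^*_\beta$ with $a_1\neq a_1^1$ requires a player-1 move to $a_1$ landing at value at least $\max\{1-\ubar{x},\ubar{x}\}-\beta$, after which player 2 loses at most another $\beta$ against the fixed $a_1$, with $W(a_1,a_2^1)\ge\ubar{x}$ supplying the $\ubar{x}-\beta$ term — your path bookkeeping (last change of $a_1$, no accumulation because $\max_b W(a_1,b)$ is fixed) simply makes that informal step precise.
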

\vspace{2mm}

The above result provides the worst possible value that the system objective function can take once the sequence $\{\a(t)\}_t$ enters the optimal class. The lemma shows that this lower bound depends upon the particular game in consideration through $\ubar{x}$. Now, we provide a lower bound on system objective irrespective of the game in consideration for any  noise parameter $\beta$ (proof in Appendix).

\begin{theorem}\label{cor_rec_state}
If $\mathcal{R}_\beta^*$ is an optimal class of action profiles under ABRA with noise parameter $\beta$, then the system objective function satisfies the following:
\begin{equation}\label{eqn_wrst_case_optimal}
 W(a_1,a_2) \ge  \frac{1}{2}-\frac{3\beta}{2} 
\end{equation}
     for all $(a_1,a_2) \in \mathcal{R}^*_\beta$. This lower bound is achievable only when $\ubar{x}$ in \eqref{eqn_x_bar} equals $\frac{1}{2}-\frac{\beta}{2}$. 
\end{theorem}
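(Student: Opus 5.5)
The plan is to derive Theorem~\ref{cor_rec_state} from Lemma~\ref{thm_rec_state} by minimizing the game-dependent lower bound over $\ubar{x}$. By Lemma~\ref{thm_rec_state}, every $(a_1,a_2)\in\mathcal{R}^*_\beta$ satisfies $W(a_1,a_2)\ge f(\ubar{x})$, where
\begin{equation*}
f(x):=\max\{\max\{1-x,x\}-2\beta,\; x-\beta\}.
\end{equation*}
So it suffices to show $\min_{x} f(x) = \frac{1}{2}-\frac{3\beta}{2}$, and that the minimum is attained only at $x=\frac{1}{2}-\frac{\beta}{2}$. Here $x$ ranges over the admissible values of $\ubar{x}$. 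Since $W$ takes values in $[0,1]$, we have $x\in[0,1]$; restricting $x$ further could only increase the minimum, so the bound remains valid.

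The first step is to note that $x-2\beta\le x-\beta$, so the term $x-2\beta$ never matters. Hence $f(x)=\max\{1-x-2\beta,\; x-\beta\}$. This is a maximum of a strictly decreasing affine function and a strictly increasing affine function of $x$.

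Next, I locate the crossing point. From $1-x-2\beta=x-\beta$ we get $x=\frac{1-\beta}{2}=\frac12-\frac{\beta}{2}$. The common value there is $\frac12-\frac{\beta}{2}-\beta=\frac12-\frac{3\beta}{2}$. For $x$ below the crossing, $f(x)=1-x-2\beta$, which is strictly larger than this value. For $x$ above it, $f(x)=x-\beta$, which is again strictly larger. Therefore $f(x)\ge\frac12-\frac{3\beta}{2}$ for all $x$, with equality if and only if $x=\frac12-\frac{\beta}{2}$. Since $\beta<\frac12$, this point lies in $[0,1]$. This gives inequality~\eqref{eqn_wrst_case_optimal}.

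The only delicate point is the uniqueness claim, namely that the bound is achievable \emph{only} when $\ubar{x}=\frac12-\frac{\beta}{2}$. The argument is as follows. Suppose some profile in $\mathcal{R}^*_\beta$ has $W=\frac12-\frac{3\beta}{2}$. Then $f(\ubar{x})\le W=\frac12-\frac{3\beta}{2}$, and by the strict inequality away from the crossing point this forces $\ubar{x}=\frac12-\frac{\beta}{2}$. So the uniqueness follows from strict monotonicity of the two affine pieces, and there are no real obstacles beyond invoking Lemma~\ref{thm_rec_state}. The statement only says the bound is attainable \emph{only} in that case, so I do not plan to exhibit a game that actually attains it.
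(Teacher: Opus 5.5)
Your proposal is correct and follows essentially the same route as the paper: both minimize the bound of Lemma~\ref{thm_rec_state} over $\ubar{x}$ and locate the crossing of $1-\ubar{x}-2\beta$ and $\ubar{x}-\beta$ at $\ubar{x}=\frac{1}{2}-\frac{\beta}{2}$, which gives the value $\frac{1}{2}-\frac{3\beta}{2}$. The differences are cosmetic: the paper splits into the cases $\ubar{x}\ge\frac{1}{2}$ and $\ubar{x}<\frac{1}{2}$ where you instead discard the dominated term $\ubar{x}-2\beta$, and you spell out the strict-inequality argument behind the ``only when'' clause, which the paper leaves implicit.
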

\vspace{2mm}

The above theorem shows that the minimum value that the system objective can take if $\{\a(t)\}_t$ converges to $\mathcal{R}_\beta^*$ can be as bad as in \eqref{eqn_wrst_case_optimal}. This may seem like a negative result but we discuss  in  section \ref{sec:num_rationality_parameter} that an appropriate rationality parameter limits the time ABRA spends in the action profiles with worst system objective value. Before that, we describe the trade-off between improving the sub-optimal classes and degrading the optimal class in the following.

\subsection{Trade-off between optimal and sub-optimal classes}
Whenever the lower bound in \eqref{eqn_wrst_case_optimal}  of Theorem \ref{cor_rec_state} is achievable, surprisingly, the quality of the sub-optimal classes significantly improves as shown in the following result with proof in the Appendix:

{\begin{proposition}\label{prop_low_bound_subopt}
    If the  bound on $W$ in  \eqref{eqn_wrst_case_optimal} under ABRA with noise parameter $\beta$ is achievable, then for any sub-optimal class $\mathcal{R}_\beta$,   we have   
    \begin{eqnarray}\label{eqn_low_bdd_sub_opt}
        \max_{(a_1,a_2) \in \mathcal{R}_\beta} W(a_1,a_2) &>& \frac{1}{2} + \frac{3\beta}{2} .
    \end{eqnarray}
\end{proposition}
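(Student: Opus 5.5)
The plan is to reduce everything to a statement about one Nash equilibrium inside $\mathcal{R}_\beta$, and then squeeze out the extra $\frac{\beta}{2}$ from the structural information that achievability of \eqref{eqn_wrst_case_optimal} gives us.

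\textbf{Step 1: pin down $\ubar{x}$.} By Theorem~\ref{cor_rec_state}, the bound \eqref{eqn_wrst_case_optimal} is achievable only when $\ubar{x}=\frac{1}{2}-\frac{\beta}{2}$. So under the hypothesis I may assume $\ubar{x}=\frac{1}{2}-\frac{\beta}{2}$. By \eqref{eqn_x_bar} this gives
\[
W(a_1^j,a_2^1)\ge \tfrac12-\tfrac\beta2 \quad\text{and}\quad W(a_1^1,a_2^k)\ge \tfrac12-\tfrac\beta2 \qquad\text{for all } j,k,
\]
with equality at $(a_1^k,a_2^1)$.

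\textbf{Step 2: extract a Nash equilibrium and non-escape inequalities.} Since $\mathcal{R}_\beta$ is recurrent and does not contain $(a_1^1,a_2^1)$, I take a profile $a^*=(a_1^*,a_2^*)\in\mathcal{R}_\beta$ attaining $\max_{\mathcal{R}_\beta}W$, and set $w:=W(a^*)$. Following the proof of Theorem~\ref{thm_abs_state}, this profile is an NE, since best responses do not decrease $W$ in a potential game. Because the optimal actions are never selected from inside the class, $a_i^1$ lies neither in $\B_i(a^*_{-i})$ nor in $\nbd{i}{a^*_{-i}}$. Hence
\[
U_i(a_i^1,a^*_{-i}) < U_i(a_i^*,a^*_{-i})-\beta \qquad\text{for } i=1,2.
\]

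\textbf{Step 3: combine two families of inequalities.}
\begin{itemize}
\item The Vetta-type chain is $1\le W(a^*\cup a^1)\le w+\sum_i [W(a_i^1,a^*)-W(a^*)]$. Submodularity bounds each bracket by $U_i(a_i^1,a_{-i}^*)$. Using the non-escape inequalities and $\sum_i U_i(a^*)\le w$, this yields $w>\frac12+\beta$ (the Theorem~\ref{thm_abs_state}-level bound).
\item Step~1 gives per-agent bounds. For agent~1, $U_1(a_1^1,a_2^*)\ge W(a_1^1,a_2^*)-W(a_2^*)\ge \ubar{x}-W(a_2^*)$. Combined with non-escape, this gives $w-\ubar{x}>\beta$. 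The same holds for agent~2.
\end{itemize}
I intend to add these per-agent inequalities to the Vetta chain in a weighted way. The aim is to replace the crude estimate $\sum_i U_i(a^*)\le w$ by one involving $W(a_1^*)$ and $W(a_2^*)$, which the $\ubar{x}$-bounds control. If the weights are chosen correctly, the factor $2\beta$ should become $3\beta$, giving $2w>1+3\beta$, i.e.\ \eqref{eqn_low_bdd_sub_opt}.

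\textbf{Fallback for Step 3.} If the direct combination falls short, I would split on $\delta$ where $\min_{\mathcal{R}_\beta}W=\frac12-\delta$, as in Theorem~\ref{thm_abs_state}. If $\delta\ge\frac\beta2$, then \eqref{eqn_new_max_two_p} gives the claim directly, at least at the boundary of its range. If $\delta<\frac\beta2$, the minimum of the class already exceeds $\ubar{x}$. I would then try to use that the profile $(a_1^k,a_2^1)$, which has value exactly $\ubar{x}$, lies within $\beta$ of agent~1's options and must therefore be excluded from reachability. This exclusion is what should force the extra $\frac\beta2$.

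\textbf{Main obstacle.} I expect the hardest step to be extracting precisely the extra $\frac{\beta}{2}$ over the generic $\frac12+\beta$ bound. This needs a careful submodularity chain that uses the \emph{equality} $\ubar{x}=\frac12-\frac\beta2$, not just the inequality. My first attempt will be to chain $W(a_2^*)\le W(a_1^1,a_2^*)$-type monotonicity with the non-escape inequality for agent~1, and then feed the resulting bound on $W(a_2^*)$ back into the Vetta sum.
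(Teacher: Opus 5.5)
\textbf{There is a genuine gap.} Step~3 never produces the extra $\frac{\beta}{2}$. The ``weighted combination'' is only announced, and the concrete chain you propose points the wrong way. After substituting $U_1(a^*)=w-W(a_2^*)$ and $U_2(a^*)=w-W(a_1^*)$, your Vetta sum reads $3w>1+2\beta+W(a_1^*)+W(a_2^*)$. So you need \emph{lower} bounds on $W(a_1^*)+W(a_2^*)$, whereas monotonicity $W(a_2^*)\le W(a_1^1,a_2^*)$ only supplies an upper bound.

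The fallback does not help either. By the first part of Theorem~\ref{thm_abs_state}, $\delta<\frac{\beta}{2}$ always, so the case $\delta\ge\frac{\beta}{2}$ is empty. For $\delta<\frac{\beta}{2}$, the bound $\frac{1}{2}+\delta+\beta$ from \eqref{eqn_new_max_two_p} falls short of $\frac{1}{2}+\frac{3\beta}{2}$.

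The root cause is Step~1. From $\ubar{x}=\frac{1}{2}-\frac{\beta}{2}$ you extract only the weak bound $W(a_1^1,a_2^k)\ge\ubar{x}$. With it, your per-agent argument gives at best $w>\ubar{x}+\beta=\frac{1}{2}+\frac{\beta}{2}$.

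\textbf{The missing idea} is to exploit the attained minimum through Lemma~\ref{lem_min_entry}, which gives a much stronger bound on the optimal action's row. With $\ubar{x}=W(a_1^k,a_2^1)$, apply the lemma to the optimal profile $(a_1^1,a_2^1)$ with $\tilde{a}_1=a_1^k$ and arbitrary $\tilde{a}_2=b$. This gives $1\le W(a_1^1,b)+\ubar{x}$, i.e.\ $W(a_1^1,b)\ge 1-\ubar{x}=\frac{1}{2}+\frac{\beta}{2}$ for every $b\in\A_2$; this is \eqref{eqn_low_min_entry}.

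Now use the non-escape inequality for agent~1 at the maximizer, which you already have in Step~2: $W(a_1^1,a_2^*)<\max_{a\in\A_1}W(a,a_2^*)-\beta=w-\beta$. It immediately yields $w>\frac{1}{2}+\frac{3\beta}{2}$, with strictness coming from non-escape.

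This is the paper's entire proof. No Vetta sum, no agent-2 inequality and no case split on $\delta$ are needed. If $\ubar{x}$ is attained by a deviation of agent~2 instead, swap the roles of the agents.
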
}
\vspace{2mm}

In all, Theorem \ref{cor_rec_state} and Proposition \ref{prop_low_bound_subopt} provide a trade-off between degrading the quality of optimal class and improving the quality of sub-optimal classes based on the choice of noise parameter $\beta$. A higher $\beta$ provides better lower bounds over the {Nash equilibria} in the sub-optimal classes \eqref{eqn_low_bdd_sub_opt} but degrades the quality of action profiles in  the optimal class \eqref{eqn_wrst_case_optimal}; this trade-off is also depicted in Figure \ref{fig:tradeoff}. 

\hide{\subsection{Expected system objective under ABRA}\label{sub_sec_expected_vals}
{Theorem \ref{thm_NE} and Theorem \ref{thm_abs_state} show that the minimum value that system objective $W$ can take over absorbing states is strictly more than $\frac{1}{2}$}. However, when the sequence $\{\a(t)\}_t$ converges to {\color{blue} an optimal class or sub-optimal class}, the system objective  can take values less than $\frac{1}{2}$ as shown in  {\color{blue}Theorem \ref{thm_abs_state} and Theorem \ref{cor_rec_state}}. It is important to note that the actual performance of ABRA depends upon the expected time that the $\{\a(t)\}_t$ spends in such minimizing action profiles. This expected time is governed by the rationality parameter $p$, and in Section~\ref{sec: numerical} we  numerically show that  ABRA spends most of the time choosing optimal actions 
 once rationality parameter $p$ is sufficiently high. {\color{blue}We discuss the case of optimal class in the following.} Let $\pi^*_p$ represent the stationary distribution concentrated on an optimal class for ABRA with noise parameter $\beta$ and rationality parameter $p$. Let $q^*_p$ represent the probability of choosing an optimal action profile under stationary distribution $\pi^*$, for example, $q^*_p = \pi^*_p(a^1_1,a^1_2)$. We show numerically that $q^*_p $ increases with $p$. Based on this observation, we make the following conjecture whose proof is the left for the future work.
 \begin{conjecture}\label{conjec}
     The probability of choosing an optimal action $q^*_p$ under stationary distribution $\pi^*_p$ concentrated on an optimal class increases with rationality parameter $p$ for any fixed noise parameter $\beta$. 
 \end{conjecture}
 \vspace{2mm}
 
 In the following, we show that the expected value of system objective under stationary distribution $\pi^*_p$ is guaranteed to be strictly more than $\frac{1}{2}$ plus a term controlled by noise parameter once $q_p^*$ is sufficiently large (proof in Appendix).

\begin{proposition}\label{prop_lower_bound_prob}
 For any fixed $\beta$, if rationality parameter $p$ is chosen such that $q_p^* > \frac{5\beta}{1+3\beta}$, then expected system objective value under stationary distribution concentrated on the optimal class $\pi^*_p$,
 \begin{eqnarray}
  E_{\pi^*_p}[W] > \frac{1}{2}+ \beta.  
 \end{eqnarray} 
\end{proposition}
\vspace{2mm}

In all, for any fixed $\beta$, one can anticipate that  increasing the rationality  parameter $p$ results in higher values of $q^*_p$. Then by Proposition \ref{prop_lower_bound_prob}, the expected value of system objective over optimal class is more than $\frac{1}{2}+\beta$. It is easy to verify that this also holds true for expected value of system objective over sub-optimal classes and absorbing states.}

\hide{\subsection{Extension to $n$ players} Assume $\a^*=(a_1^*,\dots,a^*_n)$ to be an optimal action profile. Now, define $\ubar{x}_k:=\min_{a_k \in \A_k}\{W(a_k,a^*_{-k})\}$, i.e., the minimum system objective value when player $k$ unilaterally deviates; observe that $\sum_{j=1}^n \ubar{x}_j \ge 1$ by Lemma \ref{lem_min_entry}. Further, let $\ubar{x}_{j:n}:= \sum_{i=j}^n \ubar{x}_i$.

Define $\mathcal{R}^*_\beta$ to be the class of action profiles that contains optimal action profile $\a^*$ and that is recurrent under some stationary distribution reached by Algorithm \ref{algo}. Then, we immediately have the following result 

\begin{theorem}\label{thm_rec_state_np}
   For any $\a \in \mathcal{R}^*_\beta$, the system objective function satisfies,
  \begin{eqnarray*}
      W(\a) &\ge& \max\{P_{n-1}-\beta, \ubar{x}_n-\beta\}\ \mbox{ where,}\\
      P_k&:=& \max\{P_{k-1}-\beta, \ubar{x}_{k}-\beta\} \ \mbox{ and},\\
      P_1&:=& 1-\ubar{x}_{2:n} - \beta.
  \end{eqnarray*} 
\end{theorem}

\begin{corollary}\label{cor_rec_state_np}
    The minimum value that can be achieved on $\mathcal{R}^*_\beta$ has the following lower bound $W(\a) \ge \frac{1}{n}-\frac{(n+1)\beta}{2}$ for all  $\a \in \mathcal{R}^*_\beta$.
\end{corollary}

{\color{red} -- not proved\begin{theorem}\label{thm_abs_state_np}
     For any class of action profiles $\mathcal{R}_\beta$ that is recurrent and does not contain optimal action profile,  we have $W(\a) > \frac{1}{2}+ \beta$ for all $\a \in \mathcal{R}_\beta$.
\end{theorem}}}

\section{Numerical Simulations} \label{sec: numerical}
As already mentioned, the sequence of action profiles $\{\a(t)\}$ generated using ABRA need not converge to a single absorbing state, and may  have multiple recurrent classes depending upon $(\beta,p)$. Thus, it is natural to  analyse the expected value of the system objective $W$ under achievable stationary distributions. Let $\Pi$ be the set of stationary distributions of the Markov chain $\{\a(t)\}_t$ concentrated on various recurrent classes and $E_\pi[\cdot]$ represent the expectation under distribution $\pi$. Define the following performance metric:
\begin{equation}\label{eqn_perf_metric}
  \P{W}{\beta,p}  := \frac{\min_{\pi \in \Pi}E_\pi[W]}{\max_{\a} W(\a)} = \min_{\pi \in \Pi}E_\pi[W],
\end{equation}as $\max_{\a} W(\a)=1$ in our framework. This metric measures the worst case expected system objective as compared to the optimal value.

\subsection{Expected system objective and Rationality parameter}\label{sec:num_rationality_parameter}
In this section, we numerically show the variation in probability of choosing an optimal action under stationary distribution concentrated over optimal class, $q^*_p$, as the rationality parameter varies. We also plot the expected system objective as a function of rationality parameter. Consider the following system objective function, 
\begin{equation}\label{eqn_W_ABRA_avoid_optimal_class_NE}
    W=\begin{bmatrix}
        1  &   0.8 &   0.6  &  0.5  \\
    0.5 &  0.6  &   0.4  &  0.39 \\
    0.5  &  0.39 &   0.39  &  0.39\\
    0.5 &  0.39  &  0.39   & 0.71
    \end{bmatrix},
\end{equation}which is controlled by actions of two players with action sets, $\A_i=\{a_i^1,a_i^2,a_i^3,a_i^4\}$ for $i =1,2$.  The $(j,k)$-th entry of matrix in \eqref{eqn_W_ABRA_avoid_optimal_class_NE} corresponds to the system objective value achieved under action profile $(a_1^j,a_2^k)$, for example $W(a^1_1,a^1_2)=1$ is the system optimal. Recall that the system objective function is the potential function for the corresponding potential game induced by marginal utilities. It is well known that any NE of the potential game is the local maximizer of the potential function (see \cite{shapley}). Thus, the game induced by  $W$ in \eqref{eqn_W_ABRA_avoid_bad_NE} has two NEs, $(a^1_1,a^1_2)$ with objective value $1$ and $(a_1^4,a_2^4)$ with objective value $0.71$; the noise parameter is fixed to $\beta=0.2$.

As the sequence of action profiles $\{\a(t)\}_t$ generated using ABRA  for $(\beta,p)$ progresses, it may converge to the absorbing state $(a_1^4,a_2^4)$ yielding a system objective value $0.71$ which is strictly more than $\frac{1}{2}+\beta$ as shown in Theorem \ref{thm_NE}. Otherwise, $\{\a(t)\}_t$ may  converges to the optimal class, which for $\beta=0.2$ contains the action profiles with following values, $\{1,0.8,0.6,0.5,0.4\}$. The minimum value that the system objective can take is 0.4, which is less than  $\frac{1}{2}$, and is worse than well known PoA bound in \cite{vetta}.
\begin{figure}
    \centering
    \includegraphics[trim ={ 3.5cm 8cm 3cm 9cm}, clip, scale=0.45]{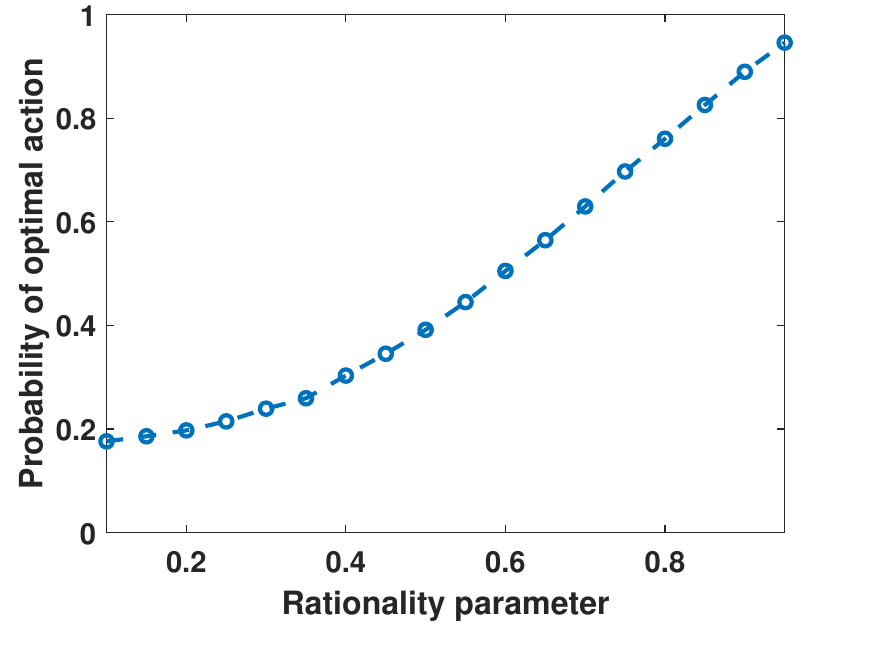}
    \caption{The probability of choosing optimal action under the stationary distribution concentrated on optimal class for $W$ in \eqref{eqn_W_ABRA_avoid_optimal_class_NE} of section \ref{sec:num_rationality_parameter}, when  noise parameter $\beta=0.2$. Thus, time spent in optimal action profile is more compared to other action profiles as rationality parameter $p$ increases.}
    \label{fig:stationary_dist}\hide{
\vspace{5mm}

    \includegraphics[trim ={ 3.5cm 8cm 3cm 9cm}, clip, scale=0.5]{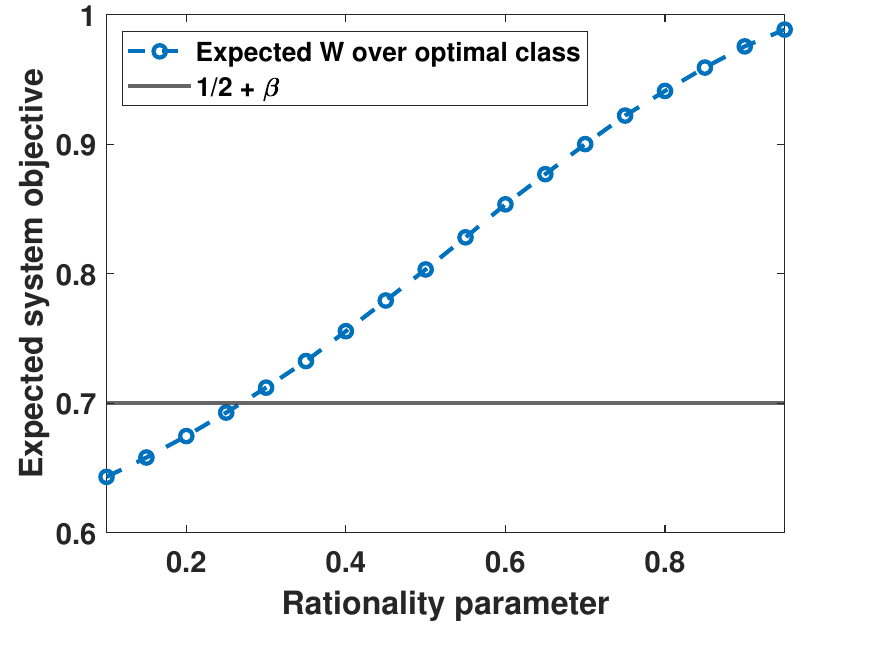}
    \caption{Expected value of system objective under stationary distribution concentrated on optimal class for $W$ in \eqref{eqn_W_ABRA_avoid_optimal_class_NE} of section \ref{sec:num_rationality_parameter}, when  noise parameter $\beta=0.2$. The expected value $E_{\pi^*_p}[W]$  increases with the rationality parameter $p$.}
    \label{fig:expected_W}}
\vspace{5mm}

     \includegraphics[trim ={ 3.5cm 8cm 3cm 9cm}, clip, scale=0.45]{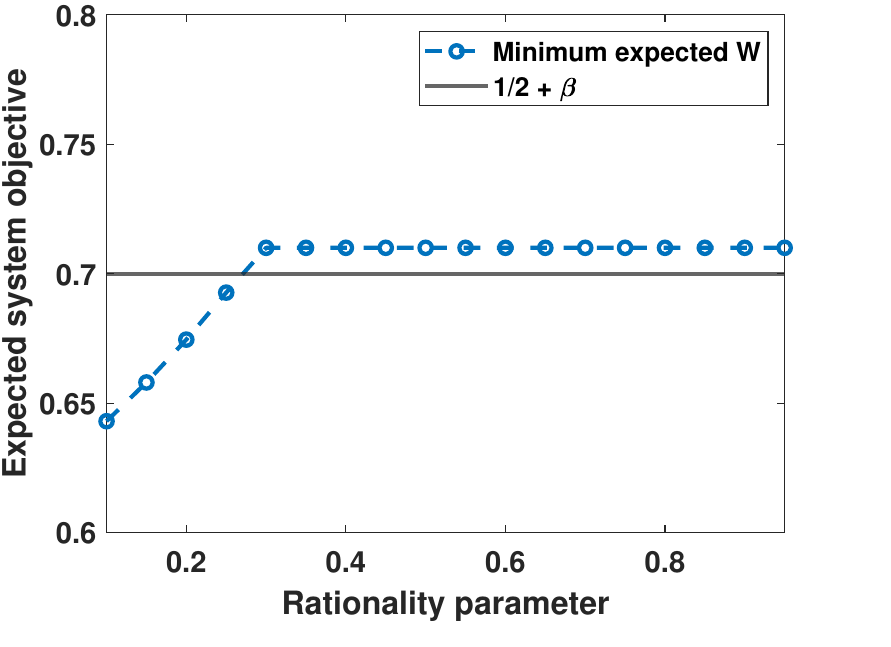}
     \caption{Minimum expected value of system objective $W$ in \eqref{eqn_W_ABRA_avoid_optimal_class_NE} of section \ref{sec:num_rationality_parameter}, when  noise parameter $\beta=0.2$ as a function of rationality parameter $p$. The minimum expected value is within $\frac{1}{2}+\beta$ of the optimal once the rationality parameter is sufficiently high.}
     \label{fig:min_exp_W}
 \end{figure}
However, in Figure \ref{fig:stationary_dist}, we observe that the probability of choosing optimal action  (when $\{\a(t)\}_t$ converges to optimal class) increases as the rationality parameter $p$ increases.  This implies that ABRA spends most of the time in optimal action profile whenever converged to optimal class.  \hide{Figure \ref{fig:expected_W} shows that the expected value of system objective under stationary distribution $\pi^*_p$ concentrated on optimal class, $E_{\pi^*_p}[W]$, also increases with the rationality parameter. Further, $E_{\pi^*_p}[W]> \frac{1}{2}+ \beta$ when rationality parameter $p \ge 0.3$; this is due to  the fact that probability of choosing optimal action, $q^*_p$, increases with $p$. One may anticipate that the minimum expected value of the system objective $ \P{W}{\beta,p} $ defined in see \eqref{eqn_perf_metric} should also improve with the rationality parameter.}  In Figure \ref{fig:min_exp_W}, we plot $ \P{W}{\beta,p} $ as a function of rationality parameter $p<1$ and observe that the minimum expected value of the system objective $ \P{W}{\beta,p} $ in  \eqref{eqn_perf_metric} improves with the rationality parameter $p$ and is above $\frac{1}{2}+\beta$ once $p\ge 0.3$.

 \subsection{Avoiding the bad NE using ABRA}\label{sec:escaping_bad_ne}
Now, we present an example where ABRA avoids a bad NE. We consider the following system objective function,
\begin{equation}\label{eqn_W_ABRA_avoid_bad_NE}
    W=\begin{bmatrix}
        1 & 0.8 & 0.5\\
         0.5   &0.6  &   0.5\\
    0.5 &  0.5  &  0.51
    \end{bmatrix},
\end{equation}which is controlled by actions of two players with action sets, $\A_i=\{a_i^1,a_i^2,a_i^3\}$ for $i =1,2$.  Here $W(a^1_1,a^1_2)=1$ is the system optimal. The game induced by  $W$ in \eqref{eqn_W_ABRA_avoid_bad_NE} has two NEs, the good NE $(a^1_1,a^1_2)$ with objective value $1$ and  the bad NE $(a_1^3,a_2^3)$ with objective value $0.51$.

 \begin{figure}[h]
    \centering
    \includegraphics[trim ={ 3.5cm 8cm 3cm 9cm}, clip, scale=0.45]{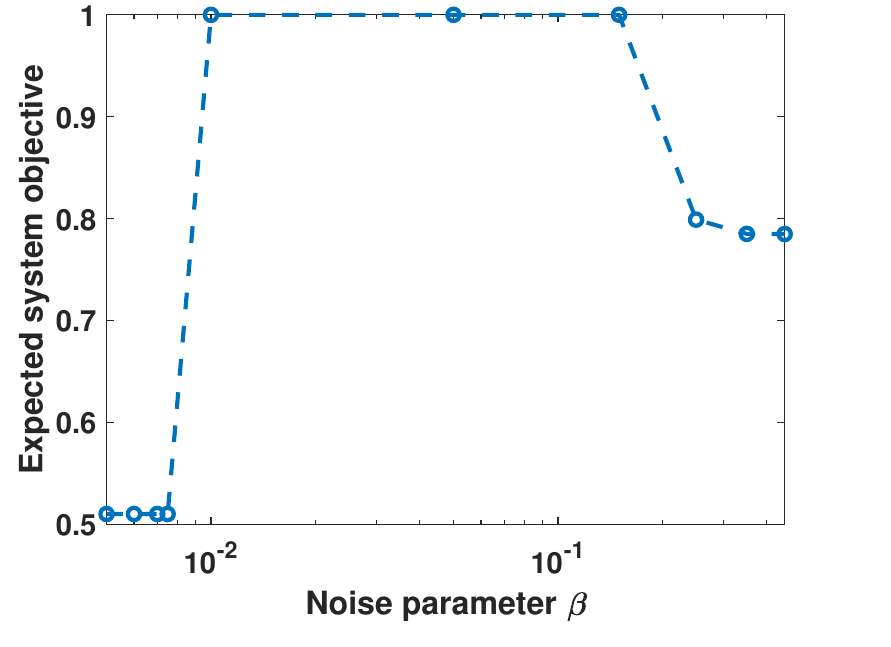}
    \caption{Minimum expected value of system objective $W$  in \eqref{eqn_W_ABRA_avoid_bad_NE} of section \ref{sec:escaping_bad_ne} as a function of noise parameter $\beta$ -- addition of small noise escapes the worst NE with $W=0.51$, large value of noise parameter bring down the system performance}
    \label{fig:beta_avoid_bad_NE}
\end{figure}

 In Figure \ref{fig:beta_avoid_bad_NE}, we plot the minimum expected value of system objective $W$, i.e., $\P{\beta,p}{W}$ of \eqref{eqn_perf_metric} 
 as a function of $\beta$ and fix the rationality parameter at $p=0.5$.   When the noise parameter $\beta$ is very small, it is not sufficient to escape the bad NE, and the agents may converge to  the  $(a_1^3,a_2^3)$, which yields minimum possible expected system objective $0.51$. When noise parameter is slightly  increased, e.g., $\beta =0.1$, then the NE $(a^3_1,a^3_2)$ becomes   transient, and the optimal action profile becomes the only absorbing state guaranteeing the optimal payoff of 1. However,  When the noise is further increased, the quality of the optimal class degrades, and all the action profiles constitute one optimal class: the addition of sub-optimal action profiles in the recurrent class brings $\P{\beta,p}{W}$ down. Interestingly, even the large noise case outperforms the zero noise case for an appropriate choice of the rationality parameter.

\section{Conclusions}
We propose an Approximate Best Response Algorithm (ABRA) to solve sub-modular maximization problems using game theory.  The ABRA is governed by a noise parameter and a rationality parameter. The noise parameter allows for the exploration and helps escaping the local maxima. On the other hand, the rationality parameter allows for exploitation of best response actions and ensures that the addition of noise does not degrade the system objective significantly. We prove for two player games that the minimum expected value of the system objective if ABRA reaches  to an equilibrium is strictly better than $\frac{1}{2}$ of the optimal by a term that can be controlled by noise parameter, and improves upon known price of anarchy bounds for such problems. Further, when ABRA converges to a recurrent class, we provide the lower bound on the system objective in such classes. Using numerical simulations, we show that the minimum expected value of the system objective under ABRA is more than half of the optimal by a term controlled by the noise parameter and the rationality parameter.   In future work, we aim to extend the findings of this work to a general $n$-player game.



\section*{APPENDIX}
{\begin{lemma}\label{lem_min_entry}
    If $W$ is sub-modular, non-decreasing and normalised, i.e. satisfies \eqref{eqn_sub_mod}, then for any $\a=(a_1,a_2)$,
    \vspace{-2mm}
    \begin{equation*}
        W(\a)  \le  W(a_1,\tilde{a}_2) +W(\tilde{a}_1,a_2) \mbox{ for all } \tilde{a}_1 \in \A_1,\ \tilde{a}_2 \in \A_2.
    \end{equation*}
\end{lemma}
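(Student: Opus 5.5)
The plan is to apply the two defining properties in \eqref{eqn_sub_mod}, submodularity and monotonicity, once each, with $(a_1,a_2)$ read as the union $a_1\cup a_2$ as in the model. Fix $\a=(a_1,a_2)$ and arbitrary $\tilde a_1\in\A_1$, $\tilde a_2\in\A_2$. We may write
\[
W(a_1,a_2) = W(a_2) + \bigl[W(a_1,a_2)-W(a_2)\bigr],
\]
so the marginal gain of $a_1$ over the set $a_2$ appears explicitly, and then bound the two pieces separately.

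\textbf{Step 1 (first term).} By monotonicity, since $a_2\subseteq a_2\cup\tilde a_1$, we have $W(a_2)\le W(\tilde a_1,a_2)$.

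\textbf{Step 2 (marginal gain).} Submodularity says the gain from adding $a_1$ to a set can only shrink as the set grows. Apply it with the base sets $\varnothing\subseteq a_2$, added set $a_1$, and normalisation $W(\varnothing)=0$:
\[
W(a_1,a_2)-W(a_2)\le W(a_1)-W(\varnothing)=W(a_1).
\]
Monotonicity then gives $W(a_1)\le W(a_1,\tilde a_2)$.

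\textbf{Step 3 (combine).} Adding the bounds from Steps 1 and 2 yields
\[
W(\a)\le W(\tilde a_1,a_2)+W(a_1,\tilde a_2),
\]
which is the claimed inequality.

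The main obstacle is notational rather than mathematical. Inequality \eqref{eqn_sub_mod} is written with roles $a\subset a'$ and an added set $b$, so I must instantiate it as $a=\varnothing$, $a'=a_2$, $b=a_1$. That instantiation requires treating $W(\cdot)$ of a single agent's action, and $W(\varnothing)$, as meaningful, which the model's set interpretation allows.

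If one prefers to avoid the empty set, there is an alternative route. Take $a=a_2$, $a'=a_2\cup\tilde a_1$ and $b=a_1$ in \eqref{eqn_sub_mod}. Then submodularity gives $W(a_1,a_2)-W(a_2)\ge W(a_1,\tilde a_1,a_2)-W(\tilde a_1,a_2)$, but this runs in the wrong direction for our purpose. I will therefore use the empty-set instantiation above, which is the standard argument.
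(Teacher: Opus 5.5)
Your proof is correct and is essentially the paper's argument. Submodularity with base sets $\varnothing\subseteq a_2$, together with $W(\varnothing)=0$, gives $W(\a)\le W(a_1)+W(a_2)$; monotonicity then bounds the two terms by $W(a_1,\tilde a_2)$ and $W(\tilde a_1,a_2)$.
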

\noindent\textbf{Proof:} From \eqref{eqn_sub_mod},  we have
\begin{eqnarray*}
    W(a_1,a_2) - W(a_2)\hspace{8mm} &\hspace{-20mm}\le&\hspace{-12mm} W(\varnothing,a_{1}) - W(\varnothing) = W(a_1), \\
    W(\a)  \le W(a_1)+W(a_2) &\le & W(a_1,\tilde{a}_2) +W(\tilde{a}_1,a_2).
\end{eqnarray*} In the above, the first inequality follows since $W$ is sub-modular. The second inequality follows since $W(\varnothing)=0$, and the last inequality follows from the non-decreasing nature of $W$. Hence the proof. \eop}

\hide{\begin{lemma}\label{lem_min_entry}
    If $W$ is sub-modular, non-decreasing and normalised, i.e. satisfies \eqref{eqn_sub_mod}, then for any $\a=(a_1,\dots,a_n)$,
    $$W(\a)  \le  W({a}_i,\tilde{a}_{-i}) +W(\tilde{a}_i,a_{-i}), $$
    for any $i$, $\tilde{a}_i \in \A_i$ and $\tilde{a}_{-i}$.
\end{lemma}
\noindent\textbf{Proof:} Fix $i$. From \eqref{eqn_sub_mod},  we have for any $j$,
\begin{eqnarray*}
    W(a_i,a_j,a_{-\{i,j\}}) - W(a_j,a_{-\{i,j\}}) &\le& W(\varnothing,a_{i}) - W(\varnothing), \\
    W(\a) - W(a_j,a_{-\{i,j\}}) & \le &W(a_i),\\
    W(\a) - W(\tilde{a}_i,a_{-i}) & \le &W(a_i,\tilde{a}_{-i}).
\end{eqnarray*} In the above, the first inequality follows since $W$ is sub-modular. The second inequality follows since $W(\varnothing)=0$, and last inequality follows from the non-decreasing nature of $W$. Hence the proof. \eop
}
\noindent\textbf{Proof of Theorem \ref{thm_NE}:} For any $j \in \A_2$, first observe by the definition of $\ubar{x}$ in \eqref{eqn_x_bar} and Lemma \ref{lem_min_entry} that
\begin{eqnarray}\label{eqn_low_min_entry}
     W(a_1^1,a_2^j) &&\hspace{-8mm} +\ \ubar{x} \ \ge\  W(a_1^1,a_2^1) =1, \mbox{ thus}\nonumber\\
   W(a_1^1,a_2^j) &\ge&  1 -\ubar{x}, \mbox{ and again by \eqref{eqn_x_bar}},\nonumber\\ 
   W(a_1^1,a_2^j)&\ge& \max\{\ubar{x},1-\ubar{x}\}\ \ge\  \frac{1}{2} \ \forall \ j \in \A_2,
\end{eqnarray}because $\max\{\ubar{x},1-\ubar{x}\}\ge  \frac{1}{2}$ for all $ \ubar{x} \in [0,1]$.
If $(a_1^*,a_2^*)$ is an absorbing state for the Markov chain generated using ABRA with noise parameter $\beta$, then it is trivially an NE, and $
    W(a_1^*,a_2^*) - \beta > W(a^1_1,a_2^*) \ge \frac{1}{2} $ by \eqref{eqn_low_min_entry}.  \eop

\noindent\textbf{Proof of Theorem \ref{thm_abs_state}:} Let $(a_1,a_2) \in \R_\beta$. Then\footnote{If any of \eqref{eqn_2p_12} or \eqref{eqn_2p_13} do not hold, then $(a_1,a_2)$ can lead to the optimal action profile $(a_1^1,a_2^1)$, violating the definition of the sub-optimal class. If~\eqref{eqn_2p_6} and \eqref{eqn_2p_7} both do not hold then $(a_1,a_2)$ cannot be in $\R_\beta$.}
\begin{eqnarray}
    W(a_1^1,a_2) &<& \max_{a \in \A_1} W(a, a_2) -\beta,\label{eqn_2p_12} \\
  \mbox{and} \ \  W(a_1,a_2^1) &<& \max_{b \in \A_2} W(a_1, b) -\beta, \label{eqn_2p_13}
\end{eqnarray}and at least  one of the following must hold:
\begin{eqnarray}
    W(a_1,a_2) &\ge& \max_{a \in \A_1} W(a, a_2) -\beta,\label{eqn_2p_6} \\
  \mbox{and/or} \ \  W(a_1,a_2) &\ge& \max_{b \in \A_2} W(a_1, b) -\beta. \label{eqn_2p_7}
\end{eqnarray}Say \eqref{eqn_2p_6} holds. Then by \eqref{eqn_low_min_entry} and \eqref{eqn_2p_12},
\begin{equation}\label{eqn_15_holds}
  \frac{1}{2}\ \le \ W(a_1^1,a_2) < \max_{a \in \A_1} W(a, a_2) - \beta \ \le \  W(a_1,a_2).
\end{equation}Thus, we only need to consider the case when  \eqref{eqn_2p_7} holds and \eqref{eqn_2p_6} does not -- we must have some $\tilde{b} \in \A_2$ with $(a_1,\tilde{b}) \in \R_\beta$ such that\footnote{When \eqref{eqn_2p_6} does not hold, $(a_1,a_2)$ is not reachable by action of player 1, thus it much be reachable by action of player 2 -- thus, there must be some action profile $(a_1,\tilde{b})\in \R_\beta$ reachable by action of player 1 such that player 2 can reach $(a_1,a_2)$ from $(a_1,\tilde{b})$.},
\begin{eqnarray}\label{eqn_2p_11}
   W(a_1,\tilde{b}) &\ge &  \max_{a \in \A_1} W(a,\tilde{b}) - \beta \ > \ W(a^1_1,\tilde{b}).
   \end{eqnarray}%
We now prove \eqref{eqn_bdd_min_rec} using contradiction. Say  $W(a_1,a_2)= \frac{1}{2}-\frac{\beta}{2}-\epsilon$ for some $\epsilon\ge 0$. 
Using equations \eqref{eqn_2p_13} and \eqref{eqn_2p_7}, 
\begin{eqnarray}
W(a_1,a^1_2) &<& \max_{b \in \A_2} W(a_1,b) - \beta \ \le \ \frac{1}{2} - \frac{\beta}{2} - \epsilon. \hspace{3mm}\label{eqn_2p_9}
\end{eqnarray}By Lemma \ref{lem_min_entry}, $W(a_1^1,\tilde{b}) \ge 1 - W(a_1,a_2^1) > \frac{1}{2} + \frac{\beta}{2} + \epsilon  $. Further by \eqref{eqn_2p_11} and \eqref{eqn_2p_9}, we have the contradiction,
\begin{eqnarray*}
 \frac{1}{2} + \frac{\beta}{2} + \epsilon  &<& W(a_1^1,\tilde{b}) \ <\  W(a_1,\tilde{b}) \ \le\ \frac{1}{2} + \frac{\beta}{2} -\epsilon.
\end{eqnarray*}

Towards proving \eqref{eqn_new_max_two_p}, say  $W(a_1,a_2) = \frac{1}{2} - \delta$ for $0 \le \delta < \frac{\beta}{2}$. \hide{Let  $(a_1,a_2) \in \R_\beta$ be such that $W(a_1,a_2) = \frac{1}{2} - \delta$ for $0 \le \delta < \frac{\beta}{2}$. From the definition of $\R_\beta$, at least one of the \eqref{eqn_2p_6} or \eqref{eqn_2p_7} must hold. First consider the case when \eqref{eqn_2p_6} holds. This implies, further using  \eqref{eqn_low_min_entry} and \eqref{eqn_2p_12},
\begin{eqnarray*}
  \frac{1}{2}\ \le \ W(a_1^1,a_2) &<& \max_{a \in \A_1} W(a, a_2) - \beta \ \le \ \frac{1}{2}  - \delta,
\end{eqnarray*}which provides a contradiction. Thus, $\min_{a \in \R_\beta} W(a) > \frac{1}{2}$ whenever \eqref{eqn_2p_6} holds. Now consider the only remaining case, when \eqref{eqn_2p_7} holds and \eqref{eqn_2p_6} does not.}By \eqref{eqn_15_holds}, equation \eqref{eqn_2p_6}  must not hold, thus \eqref{eqn_2p_7} should be true. By \eqref{eqn_2p_13}  and  \eqref{eqn_2p_7}, 
\begin{eqnarray}
   \frac{1}{2}-\delta = W(a_1,a_2) \ \ge\  \max_{b \in \A_2} W(a_1, b) -\beta &>&  W(a_1,a^1_2).\nonumber
\end{eqnarray}Thus, by Lemma \ref{lem_min_entry}, $ W(a^1_1,a_2) > \frac{1}{2} +\delta$.
\hide{\begin{eqnarray}\label{eqn_2p_3}
     W(a^1_1,a_2) &>& \frac{1}{2} +\delta.
\end{eqnarray}}Using \eqref{eqn_2p_12}, 
\begin{eqnarray*}
  \frac{1}{2} +\delta \ <\  W(a_1^1,a_2) &<& \max_{a \in \A_1} W(a, a_2) -\beta \\
  \max_{a \in \A_1} W(a, a_2)&>&\frac{1}{2} + \delta + \beta,
\end{eqnarray*}and since $\arg\max_{a \in \A_1} W(a, a_2) \in \R_\beta$ whenever $(a_1,a_2) \in \R_\beta$, \eqref{eqn_new_max_two_p} is true.   \eop

\noindent\textbf{Proof of Lemma \ref{thm_rec_state}: } Fix $\beta$ and consider the problems with non-empty $\mathcal{R}_\beta^*$. Consider any action profile $(a_1,a_2)$ in $\mathcal{R}^*_\beta$. If $a_1 = a^1_1$, then by \eqref{eqn_low_min_entry}, 
\begin{eqnarray*}
    W(a_1,a_2) &\ge& \max\{\ubar{x},1-\ubar{x}\},\\
    &\ge& \max\{\max\{1-\ubar{x}, \ubar{x}\}-2\beta,\ubar{x}-\beta\}.
\end{eqnarray*}
Now, say $a_1 \ne a^1_1$. Observe that for all $\tilde{a}_2 \in \A_2$, the set of action choices for player 1, $\B_1(\tilde{a}_2)\cup\nbd{1}{\tilde{a}_2 }$ always guarantees a payoff more than $\max\{1-\ubar{x},\ubar{x}\}-\beta$ using \eqref{eqn_low_min_entry}. This implies that $(a_1,a_2) \in \mathcal{R}_\beta^*$ under ABRA only if $$W(a_1,a_2')\ge \max\{1-\ubar{x},\ubar{x}\}-\beta \mbox{ for some } a_2' \in \A_2.$$%
Then, the $\beta$-neighbourhood of best responses for player 2, $\B_2(a_1)\cup \nbd{2}{a_1 }$ guarantees a payoff more than $\max\{\max\{1-\ubar{x},\ubar{x}\}-2\beta,\ubar{x}-\beta\}$.  Since the agents can never move to an action profile worse than the ones in $\beta$-neighbourhood under ABRA, and since the above is true for any $(a_1,a_2)$, we have the proof. \eop
\hide{\vspace{3cm}
First consider the case when $\max\{1-\ubar{x}, \ubar{x}\}-2\beta> \ubar{x}-\beta$, which implies $1-\ubar{x}> \ubar{x}$. 

Say $(a_1,a_2)$ is an action profile such that $W(a_1,a_2) < 1- \ubar{x}-2\beta$. If $(a_1,a_2)$ is in $\mathcal{R}_b$ then . Then, if player 1 moves, the next action profile $(a',b)$is guaranteed to have the payoff of at least $1-\ubar{x}-\beta$, due to lemma 1, which implies $W(a_1^1,b) \ge 1-\ubar{x}$ for any $b$. Thus, player 1 can never bring the system objective value lower than $1-\ubar{x}-\beta$. Further, when player 2 moves, even in the worst case scenario, it can not bring the system objective value to be less than $\max\{\max\{1-\ubar{x}, \ubar{x}\}-2\beta, \ubar{x}-\beta\}$ by the definition of $\ubar{x}$.

On the other hand, when $\max\{1-\ubar{x}, \ubar{x}\}-2\beta \le \ubar{x}-\beta$. Say there is an action profile $(a,b) \in \mathcal{R}^*_\beta$ such that $W(a,b) < \ubar{x}-\beta$. Then, if player 1 moves, the next action profile $(a',b)$is guaranteed to have the payoff of at least $\max\{1-\ubar{x}, \ubar{x}\}-\beta$, due to lemma 1 and the definition of $\ubar{x}$, which implies $W(a_1^1,b) \ge \max\{1-\ubar{x}, \ubar{x}\}-\beta \ge \ubar{x}-\beta$ for any $b$. Thus, player 1 can never bring the system objective value lower than $\ubar{x}-\beta$. Further, when player 2 moves, even in the worst case scenario, it can not bring the system objective value to be less than $\ubar{x}-\beta\}$ by the definition of $\ubar{x}$. Hence the proof.  \eop}

\noindent\textbf{Proof of Theorem \ref{cor_rec_state}:} From Lemma \ref{thm_rec_state}, the minimum value that $W$ can take over $\mathcal{R}_\beta^*$ is $\max\{\max\{1-\ubar{x}, \ubar{x}\}-2\beta, \ubar{x} - \beta\}$. Hence, for $\ubar{x} \ge \frac{1}{2}$,  $W \ge \ubar{x}-\beta \ge \frac{1}{2}-\beta$. For $\ubar{x}<\frac{1}{2}$, $W \ge \max\{1-\ubar{x} - 2\beta, \ubar{x}-\beta\}$. The right hand side is minimized when $\ubar{x}= \frac{1}{2}-\frac{\beta}{2}$, and provides a minimum value of $\frac{1}{2}-\frac{3}{2}\beta$. Thus, the result. \eop

{\noindent\textbf{Proof of Proposition \ref{prop_low_bound_subopt}:} The lower bound in Theorem \ref{cor_rec_state} is achieved when $\ubar{x}=\frac{1}{2}-\frac{\beta}{2}$. By Lemma \ref{lem_min_entry}, we get,
\begin{eqnarray}\label{eqn_pp_1}
     W(a^1_1,b) &>& \frac{1}{2} +\frac{\beta}{2} \mbox{ for all } b \in \A_2.
\end{eqnarray}Let $(a_1^*,a_2^*) \in \arg\max_{(a_1,a_2)\in\R_\beta} W(a_1,a_2)$. By \eqref{eqn_2p_12},
\begin{eqnarray*}
  \frac{1}{2} + \frac{\beta}{2} &<& \max_{a \in \A_1} W(a, a^*_2) -\beta\ = \  W(a_1^*,a_2^*) -\beta,
\end{eqnarray*}thus the proof.\eop}

\hide{From  \eqref{eqn_low_min_entry}, we have $W(a_1^1,a_2) > 1-\ubar{x} = \frac{1}{2}+ \frac{\beta}{2}$ for all $(a,b) \in \mathcal{R}_\beta$. Further, we also have that,
\begin{eqnarray*}
    \max_{(a,b)\in \mathcal{R}_\beta} W(a,b) -\beta &>& 1-\ubar{x} = \frac{1}{2}+\frac{\beta}{2},\\
    \mbox{thus, }  \max_{(a,b)\in \mathcal{R}_\beta} W(a,b) &>& \frac{1}{2}+\frac{3}{2}\beta.
\end{eqnarray*}Thus the proof. \eop}

\hide{noindent\textbf{Proof of Proposition \ref{prop_lower_bound_prob}:} 
From the definition of $q_p^*$ and Theorem \ref{cor_rec_state}, we have
\begin{eqnarray*}
    E_{\pi^*_p} [W] &\ge& q_p^* + (1-q_p^*) \left(\frac{1}{2}- \frac{3\beta}{2}\right),\\
    & =  &\frac{1}{2}+\beta  + q_p^* \left(\frac{1}{2} + \frac{3\beta}{2}\right)  -\frac{5\beta}{2},\\
    &> & \frac{1}{2}+\beta\ \   \mbox{ when } q_p^*\  > \ \frac{5\beta}{1+3\beta}.
\end{eqnarray*} \eop}




\begin{thebibliography}{99}
\bibitem{Qu}
Qu, G., Brown, D., \& Li, N. (2019). Distributed greedy algorithm for multi-agent task assignment problem with submodular utility functions. Automatica, 105, 206-215.

\bibitem{eitan}
Altman, E., El Azouzi, R., \& Jiménez, T. (2004). Slotted Aloha as a game with partial information. Computer networks, 45(6), 701-713.

\bibitem{liu}
Liu, C., Lu, K., Chen, X., \& Szolnoki, A. (2023). Game-theoretical approach for task allocation problems with constraints. Applied Mathematics and Computation, 458, 128251.

\bibitem{zhu}
Zhu, M., \& Martínez, S. (2013). Distributed coverage games for energy-aware mobile sensor networks. SIAM Journal on Control and Optimization, 51(1), 1-27.

\bibitem{Brown}
Brown, P. N., Seaton, J. H., \& Marden, J. R. (2023). Robust networked multiagent optimization: designing agents to repair their own utility functions. Dynamic Games and Applications, 13(1), 187-207.
\bibitem{ferguson}
Ferguson, B. L., Brown, P. N., \& Marden, J. R. (2020, July). Carrots or sticks? the effectiveness of subsidies and tolls in congestion games. In 2020 American Control Conference (ACC) (pp. 1853-1858). IEEE.

\bibitem{Kordonis}
Kordonis, I., Dessouky, M. M., \& Ioannou, P. A. (2019). Mechanisms for cooperative freight routing: Incentivizing individual participation. IEEE Transactions on Intelligent Transportation Systems, 21(5), 2155-2166.

\bibitem{marden}
Marden, J. R., \& Shamma, J. S. (2015). Game theory and distributed control. In Handbook of game theory with economic applications (Vol. 4, pp. 861-899). Elsevier.

\bibitem{jaleel}
Jaleel, H., \& Shamma, J. S. (2020). Distributed optimization for robot networks: From real-time convex optimization to game-theoretic self-organization. Proceedings of the IEEE, 108(11), 1953-1967.

\bibitem{martin}
Martin, J. G., Muros, F. J., Maestre, J. M., \& Camacho, E. F. (2023). Multi-robot task allocation clustering based on game theory. Robotics and Autonomous Systems, 161, 104314.

\bibitem{marden_util_des}
Paccagnan, D., Chandan, R., \& Marden, J. R. (2019). Utility design for distributed resource allocation—part i: Characterizing and optimizing the exact price of anarchy. IEEE Transactions on Automatic Control, 65(11), 4616-4631.

\bibitem{marden_potential}
Leslie, D. S., \& Marden, J. R. (2011, October). Equilibrium selection in potential games with noisy rewards. In International Conference on NETwork Games, Control and Optimization (NetGCooP 2011) (pp. 1-4). IEEE.

\bibitem{PoA}
Papadimitriou, C. (2001, July). Algorithms, games, and the internet. In Proceedings of the thirty-third annual ACM symposium on Theory of computing (pp. 749-753).

\bibitem{vetta}
Vetta, A. (2002, November). Nash equilibria in competitive societies, with applications to facility location, traffic routing and auctions. In The 43rd Annual IEEE Symposium on Foundations of Computer Science, 2002. Proceedings. (pp. 416-425). IEEE.

\bibitem{seaton}
Seaton, J. H., \& Brown, P. N. (2023). On the Intrinsic Fragility of the Price of Anarchy. IEEE Control Systems Letters.

\bibitem{marden_logit}
Marden, J. R., \& Shamma, J. S. (2012). Revisiting log-linear learning: Asynchrony, completeness and payoff-based implementation. Games and Economic Behavior, 75(2), 788-808.

\bibitem{tatarenko}
Tatarenko, T. (2014, December). Log-linear learning: Convergence in discrete and continuous strategy potential games. In 53rd IEEE Conference on Decision and Control (pp. 426-432). IEEE.


\bibitem{Levin}
Levin, D. A., \& Peres, Y. (2017). Markov chains and mixing times (Vol. 107). American Mathematical Soc.

\bibitem{Norris}
Norris, J. R. (1998). Markov chains (No. 2). Cambridge university press.

\bibitem{shapley}Monderer, D., \& Shapley, L. S. (1996). Potential games. Games and economic behavior, 14(1), 124-143.
\end{thebibliography}
\end{document}